\documentclass[11pt,letter]{article}
\usepackage[margin=1in]{geometry}
\usepackage{libertine}
\usepackage{xfrac}
\usepackage{bm}
\usepackage{datetime}
\usepackage{hhline}
\usepackage{multirow}
\usepackage{stmaryrd}
\usepackage{booktabs}
\usepackage{enumitem}
\usepackage{wrapfig}
\usepackage{subcaption}
\usepackage[T1]{fontenc}
\usepackage[utf8]{inputenc}
\usepackage{pgfplots}
\usepackage{amsmath, amssymb}
\usepackage{amsthm}
\usepackage{color}
\usepackage{cases}
\usepackage{xparse}
\usepackage{xargs}
\usepackage{appendix}
\usepackage[ruled,vlined,noend,linesnumbered]{algorithm2e}
\usepackage{algpseudocode}
\usepackage{verbatim, xspace}
\usepackage{tikz}
\usepackage{mathtools}
\usepackage{mleftright}
\usepackage{thm-restate}
\usepackage[framemethod=TikZ]{mdframed}
\usepackage[sort]{cite}

\SetCommentSty{FontInPseudocodeComments}
\usepackage{hyperref}
\usepackage{pgfplots}

\usepackage{xcolor}
\usepackage[noabbrev,capitalize]{cleveref}

\newcommand{\citet}[1]{\cite{#1}}

\newtheorem{theorem}{Theorem}

\newtheorem{lemma}[theorem]{Lemma}

\newtheorem{claim}[theorem]{Claim}

\newtheorem*{maingoal*}{Main Question}

\numberwithin{theorem}{section}
\numberwithin{lemma}{section}
\numberwithin{claim}{section}
\numberwithin{corollary}{section}
\numberwithin{definition}{section}
\numberwithin{observation}{section}
\numberwithin{proposition}{section}

\crefname{claim}{Claim}{Claims}
\Crefname{claim}{Claim}{Claims}

\newenvironment{claimproof}[1][\proofname]
{\proof[#1]}
{\endproof}

\newcommand{\norm}[1]{\mleft\lVert#1\mright\rVert}

\newcommand{\ceil}[1]{\mleft\lceil #1 \mright\rceil}

\newcommand{\eps}{\varepsilon}

\newcommand{\Oh}{\mathcal{O}}

\newcommand{\Z}{\mathbb{Z}}

\newcommand{\poly}{\mathrm{poly}}

\newcommand{\Mod}[1]{\ (\mathrm{mod}\ #1)}

\newcommand{\iv}[1]{\llbracket #1 \rrbracket}

\newcommand{\veca}{\normalfont{\textbf{a}}}
\newcommand{\vecb}{\normalfont{\textbf{b}}}
\newcommand{\vect}{\normalfont{\textbf{t}}}

\newcommand{\vecv}{\normalfont{\textbf{v}}}
\newcommand{\vecx}{\normalfont{\textbf{x}}}
\newcommand{\vecy}{\normalfont{\textbf{y}}}
\newcommand{\vecz}{\normalfont{\textbf{z}}}

\newcommand{\vecs}{\normalfont{\textbf{s}}}
\newcommand{\vecr}{\normalfont{\textbf{r}}}

\newcommand{\veczero}{\normalfont{\textbf{0}}}
\newcommand{\ILP}{\normalfont{\text{ILP}}}

\renewcommand{\leq}{\leqslant}
\renewcommand{\geq}{\geqslant}
\renewcommand{\le}{\leqslant}
\renewcommand{\ge}{\geqslant}

\renewcommand{\epsilon}{\varepsilon}

\title{A Simple Algorithm for Counting Solutions to Integer Programs}
\date{}
\author{
    Daniel Dadush\footnote{Centrum Wiskunde \& Informatica (CWI), Amsterdam, The Netherlands, \textsf{dadush@cwi.nl}.}
    \and 
  Anita D\"urr\footnote{ETH Z\"urich, Z\"urich, Switzerland, \textsf{anita.duerr@inf.ethz.ch}.}
    \and
    Thekla Hamm\footnote{Eindhoven University of Technology, Eindhoven, The Netherlands, \textsf{t.l.s.hamm@tue.nl}.}
    \and
    Jonas Schmidt\footnote{Bocconi University, Milan, Italy, \textsf{jonas.schmidt2@phd.unibocconi.it}.}
    \and
    Karol W\k{e}grzycki\footnote{Max Planck Institute for Informatics,
        Saarbr\"ucken, Germany, \textsf{kwegrzyc@mpi-inf.mpg.de}. Supported by the Deutsche Forschungsgemeinschaft (DFG, German Research Foundation) grant number 559177164. }
}

\begin{document}

\maketitle

\thispagestyle{empty}
\begin{abstract}
    We consider the problem of counting solutions to integer linear programming
    problems of the form $\{ \vecx \in \mathbb{Z}^n_{\ge 0} : A\,\vecx = \vecb\}$ where
    $A \in \{-\Delta,\ldots,\Delta\}^{m \times n}$ and $\vecb \in \mathbb{Z}^m$.
    We present an algorithm that counts solutions with $(n\Delta)^{m + o(1)}\! \cdot \log\norm{\vecb}$ arithmetic operations. This simplifies the algorithm of Gribanov and Malyshev, who
    considered the problem for the more general setting in which $A$ is $\Delta$-modular.
\end{abstract}

\vfill
\paragraph{Acknowledgements}
We thank the organizers of the workshop \textit{Phase Transitions in Combinatorics, Algorithms and Probability} held at the CWI, Amsterdam in May 2026 for providing a fruitful research environment in which this work was initiated.

\clearpage
\setcounter{page}{1}

\section{Introduction}

Integer linear programming is a fundamental
problem in discrete optimization and theoretical computer science.
In this paper, we study the problem of computing the cardinality of
\begin{equation*}
    \ILP(A,\vecb) \coloneq \mleft\{ \vecx \in \mathbb{Z}_{\ge 0}^n : A\vecx = \vecb \mright\},
\end{equation*}
where $A \in \{-\Delta,\ldots,\Delta\}^{m \times n}$ and $\vecb \in
\mathbb{Z}^m$.

The corresponding feasibility problem asks whether $\ILP(A,\vecb)$ is nonempty.
Building on the seminal work of Lenstra and Kannan~\cite{lenstra1983integer,kannan1983improved}, Reis and Rothvoss~\cite{reis2023subspace} gave a randomized algorithm deciding feasibility in time
$
(\log(2n))^{\Oh(n)} \cdot \poly(|I|),
$
where $|I|$ denotes the input size. 
Note, however, that these algorithms only determine whether a solution exists and do not compute the number $|\ILP(A, \vecb)|$ of solutions. For the counting problem, the best known general-purpose method is Barvinok's algorithm~\cite{barvinok1993polynomial,barvinok2003short}, which computes $|\ILP(A,\vecb)|$ in time $|I|^{\Oh(n\log n)}$ (see~\cite{dyer1997barvinok}).

These algorithms are exponential when $n$ is large. Gribanov and Malyshev~\cite{gribanov2022faster} considered the regime in which $A$ is $D$-modular, i.e., $D(A) \leq D$, where, using \(A_{R,C}\) to denote the restriction of \(A\) to a subset \(R\) of its rows and a subset \(C\) of its columns,
\[
D(A)\coloneqq \max\mleft\{|\det(A_{R,C})| \text{ such that } |R|=|C|=\operatorname{rank}(A)\mright\}.
\]
Gribanov and Malyshev~\cite{gribanov2022faster} presented an algorithm computing $|\ILP(A,\vecb)|$ using
$\Oh\big(n^4 \nu^2  D^4 \log{D}\big)$ arithmetic operations,
where $\nu = \nu(A)$ is the maximum possible number of vertices in
the polytope defined by $A$ as $\vecb$ varies (see~\cite[Definition 2]{gribanov2022faster}). Since $D(A)\le \Oh(\sqrt{m}\Delta)^m$
and $\nu(A)\le \Oh(n^m)$, their bound simplifies to $\Oh(\sqrt{mn}\Delta)^{4m}\cdot \poly(|I|)$ arithmetic operations.
More recently, Gribanov, Shumilov, Malyshev, and Zolotykh~\cite{gribanov2022sparse} (see also~\cite{gribanov2023representation})
gave an improved bound of $\Oh(\nu^2 d^4 D^3)$ arithmetic operations, where $d$ is the polytope dimension.
Since $d\le n$, the same estimates for $\nu(A)$ and $D(A)$ give
\[
\Oh\mleft(n^{2m+4} m^{3m/2} \Delta^{3m}\mright).
\]
in the parameters used here.
In this work, we give a simpler algorithm with improved dependence on $n$, $m$, and $\Delta$.

\begin{theorem}\label{thm:main}
    There exists an algorithm that computes $|\ILP(A,\vecb)|$ with $\Oh(n\Delta)^{m}  \log(n\Delta)\log\bigl(n\Delta\norm{\vecb}_\infty\bigr)$ arithmetic operations\footnote{To simplify the presentation, we bound the running time always in the number of arithmetic operations. In the standard word RAM model, the running time equals the stated bound increased by a factor of $O(n\log(\Delta\norm{\vecb}))$ (see the discussion in \Cref{sec:general}).}.
\end{theorem}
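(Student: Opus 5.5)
We plan a dynamic program over partial sums combined with a divide-and-conquer on the bits of $\vecb$, in the spirit of Eisenbrand--Weismantel and Jansen--Rohwedder, but with counting in place of feasibility. Write $\vecx = \sum_{j\ge 0} 2^j \vecx^{(j)}$ with each $\vecx^{(j)}\in\{0,1\}^n$, so that every $\vecx\in\ILP(A,\vecb)$ corresponds to exactly one such binary decomposition. For each level $j$, with $L=\lceil\log_2(n\Delta\norm{\vecb}_\infty)\rceil+1$ levels, set $\vecb^{(j)}=\lfloor \vecb/2^j\rfloor$ coordinatewise. Any $\vecx$ with $A\vecx=\vecb$ satisfies
\[
\Bigl\lVert A\sum_{i\ge j}2^{i-j}\vecx^{(i)} - \vecb/2^j\Bigr\rVert_\infty \le n\Delta,
\]
because the lower bits contribute at most $\sum_{i<j}2^{i-j}\norm{A\vecx^{(i)}}_\infty< n\Delta$. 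Hence the partial value $\vecv_j := A\sum_{i\ge j}2^{i-j}\vecx^{(i)}$ lies in a box of side $\Oh(n\Delta)$ around $\vecb^{(j)}$, which contains $\Oh(n\Delta)^m$ candidate vectors.

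The algorithm is the following dynamic program. For each $j$ and each $\vecv$ in this box, let $N_j(\vecv)$ denote the number of tuples $(\vecx^{(i)})_{i\ge j}$ with $\vecv_j=\vecv$ such that all intermediate values $\vecv_{i}$, for $i>j$, also lie in their respective boxes. The boxes are valid constraints, since every true solution satisfies them at every level. The recurrence is $\vecv_j = 2\vecv_{j+1} + A\vecx^{(j)}$, which gives
\[
N_j(\vecv)=\sum_{\vecu} N_{j+1}(\vecu)\cdot c\bigl(\vecv-2\vecu\bigr),
\]
where $c(\vecw)=|\{\vecy\in\{0,1\}^n : A\vecy=\vecw\}|$. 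The base case is $N_L = [\vecv=0]$, and the answer is $N_0(\vecb)$.

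The table $c$ is supported on $\{-n\Delta,\dots,n\Delta\}^m$. It is computed once by the subset-sum style recursion $c_k(\vecw)=c_{k-1}(\vecw)+c_{k-1}(\vecw-A_k)$ over the columns. This naively costs $n\cdot\Oh(n\Delta)^m$ operations, which exceeds the target by a factor of $n$. To remove it, we instead compute the generating polynomial $\prod_k(1+z^{A_k})$ by a balanced product tree with multidimensional FFT. The degree is doubled at each of the $\log n$ levels, so the total cost is $\Oh(n\Delta)^m\log(n\Delta)$.

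Each level's transition is a sum of the form $\sum_\vecu N_{j+1}(\vecu)c(\vecv-2\vecu)$. We split it by the parity class of $\vecv$ modulo $2$. Within each of the $2^m$ classes this is an ordinary $m$-dimensional convolution of two arrays of size $\Oh(n\Delta)^m$, restricted to a window. Alternatively, we upsample $N_{j+1}$ by a factor of $2$ and perform a single convolution. Either way, one FFT of size $\Oh(n\Delta)^m$ suffices, costing $\Oh(n\Delta)^m\log(n\Delta)$ operations per level. Multiplying by the $L=\Oh(\log(n\Delta\norm{\vecb}_\infty))$ levels gives the bound claimed in \Cref{thm:main}.

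The main obstacle is correctness and uniqueness of the counting. We must check that the box constraint at every level loses no solutions, which follows from the inequality displayed above applied to the true solution. We must also check that the correspondence between solutions and binary tuples is a bijection; it is, since binary expansions are unique. Finally, we must ensure that $L$ levels suffice, which holds because any solution has $\norm{\vecx}_\infty$ bounded by something like $(n\Delta)^{\Oh(m)}\norm{\vecb}_\infty$ if the count is finite. This last point is the delicate one: if there is a nonzero $\vecy\ge 0$ with $A\vecy=0$, the count is infinite. We would detect that case separately, for instance by a feasibility test for the recession cone together with feasibility of $\ILP(A,\vecb)$, the latter via the same DP with Boolean entries. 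Otherwise we use the standard proximity/Steinitz bound $\norm{\vecx}_1\le (n\Delta)^{\Oh(m)}\norm{\vecb}_\infty$ to fix $L$, which only adds an $\Oh(m\log(n\Delta))$ term inside the logarithm. The FFT is exact over the integers, or can be done modulo large primes, and the large numbers involved are absorbed into the arithmetic-operation model.
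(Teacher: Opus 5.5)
Your dynamic program is correct. It is the paper's recurrence traversed in the opposite direction. The paper processes bits from the least significant one upward and stores the multiset $C_i(\vecb)$ of values $A\vecx'$ over low parts $\vecx'\in\{0,\dots,2^i-1\}^n$ with $A\vecx'\equiv\vecb\Mod{2^i}$. Under the substitution $\vecv_i=(\vecb-A\vecx')/2^i$, this support is exactly your box of radius $n\Delta$ around $\vecb/2^i$. The paper's start $C_0(\vecb)=\{\veczero\}$ is your final evaluation at $\vecv_0=\vecb$, and its final count of $\vecb$ in $C_{i^*}(\vecb)$ is your base case $\vecv_L=\veczero$. Bottom-up, the congruence filter gives the $\Oh(n\Delta)^m$ state bound for free; top-down you need the explicit box pruning, which you justify correctly. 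The per-level FFT and the product-tree computation of $c$ (the paper's multiset $B$) are the same in both.

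One correction: your first choice $L=\lceil\log_2(n\Delta\norm{\vecb}_\infty)\rceil+1$ is too small. For instance, the system $x_k-\Delta x_{k+1}=0$ ($k<m$), $x_m=\beta$ has a unique solution with $x_1=\Delta^{m-1}\beta$. You need $2^L$ to exceed a bound such as $\Gamma_{\vecb}=(\norm{\vecb}_\infty+4m\Delta+1)^m$, as you indicate later.

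The genuine gap is the finiteness test, which you flag as delicate but do not close within the budget. First, you have not shown that a recession-cone test by a general LP solver fits into $\Oh(n\Delta)^m\log(n\Delta)\log(n\Delta\norm{\vecb}_\infty)$ arithmetic operations for all $m$. The paper mentions this LP route and deliberately avoids it for exactly this reason. Second, your Boolean feasibility DP only sees solutions with $\norm{\vecx}_\infty<2^L$. It therefore needs the fact that a feasible system has a solution with $\norm{\vecx}_1\le\Gamma_{\vecb}$. The bound you quote is a different statement: it bounds all solutions when no cycle exists.

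The paper settles both points with one Steinitz argument, \Cref{cor:steinitz-apply}. Any solution of $A\vecx=\vect$ with $\norm{\vecx}_1>\Gamma_{\vect}$ yields a nonzero solution $\vecx'$ with $\norm{\vecx'}_1\le\Gamma_{\vect}$, and a nonzero $\vecy\in\mathbb{Z}^n_{\ge0}$ with $A\vecy=\veczero$ and $\norm{\vecy}_1\le\Gamma_{\veczero}$. To finish your proof, choose $2^L>\Gamma_{\vecb}$ and let $N^{\vect}_0$ denote your final table for right-hand side $\vect$. Run the DP for $\vect=\vecb$ and for $\vect=\veczero$. Return $0$ if $N^{\vecb}_0(\vecb)=0$; return $\infty$ if $N^{\vecb}_0(\vecb)>0$ and $N^{\veczero}_0(\veczero)\ge2$; and return $N^{\vecb}_0(\vecb)$ otherwise. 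The justification is word for word \Cref{cl:steinitz}, and no LP is needed. The resulting $L=\Oh(m\log(\norm{\vecb}_\infty+m\Delta))$ levels cost only a factor $m$, which is absorbed into $\Oh(n\Delta)^m$.
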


Notice that in the packing case (i.e., when every entry of $A$ is
non-negative), after handling zero columns separately, the value
$|\ILP(A,\vecb)|$ can be computed using
$\Oh\bigl(n\norm{\vecb}_\infty^m\bigr)$ arithmetic operations by dynamic programming, but
even in this special case it is not clear how to avoid polynomial dependence on
$\norm{\vecb}_\infty$. For general $A$, the feasibility problem under this
parameterization was first studied by
Papadimitriou~\cite{papadimitriou1981complexity}. Eisenbrand
and Weismantel~\cite{eisenbrand2019proximity} proposed an $\Oh(m
\Delta)^{m} \cdot n\norm{\vecb}_1$-time algorithm. Jansen and
Rohwedder~\cite{jansen2023integer} later improved this to a bound
of $\Oh(m\Delta)^m \log(\Delta) \log(\Delta +
\norm{\vecb}_\infty) + \Oh(nm)$. All of these methods rely on proximity results
guaranteeing the existence of an integral solution close to the LP relaxation.
This approach is challenging to apply in the counting setting.

\paragraph{Organization and Exposition.}
Our paper follows a slightly non-standard structure.
We start by presenting the simpler packing case of \Cref{thm:main} in \Cref{sec:packing}.
The algorithm is presented in the language that we discovered it in, namely using a generating function of $\ILP(A,\vecb)$ and polynomial multiplications and manipulations of polynomials.
This language turns out to be convenient for expressing the operations that our algorithm performs.
We find this surprising, since our algorithm does not rely on any specific deeper theory used in other counting algorithms such as Barvinok's algorithm~\cite{barvinok1993polynomial,barvinok2003short}. (In fact, our algorithm even works with a different generating function.)

While allowing for an efficient presentation and interesting perspective, we find it less intuitive to grasp which operations the algorithm in \Cref{sec:packing} is actually performing and therefore harder to generalize (e.g., to the non-packing case). 
Thus, we believe that it adds value to present the generalized algorithm and full proof of \Cref{thm:main} in a more traditional, dynamic-programming-based way in \Cref{sec:general}.

Based on this difference, readers may choose whether to read \Cref{sec:packing} or \Cref{sec:general}, as the main ideas are identical.
The main difference is the need for the Steinitz lemma (see,~\cite{eisenbrand2019proximity}).

\section{Packing ILP}\label{sec:packing}

First, we focus on the special case of packing ILP, i.e., when every entry of $A$ is non-negative.  
Our algorithm relies on manipulating the generating polynomial of $\ILP(A,\vecb)$. In order to multiply two polynomials, we repeatedly use the following standard slight generalization of the Fast Fourier Transform (FFT) algorithm to multivariate polynomials (see, e.g.,~\cite[Chapter~30]{cormenetal} together with Exercise 30-3).

\begin{lemma}[Fast Fourier Transform,~\protect{\cite[Chapter~30]{cormenetal}}]
    \label{lem:fft}
    There is an algorithm that, given two $m$-variate polynomials $P(\vecz)$ and $Q(\vecz)$ with degree at most $\Gamma$ per variable and represented by their monomials, computes the monomials of $P(\vecz) \cdot Q(\vecz)$ using $O(\Gamma)^m\cdot \log \Gamma$ operations.
\end{lemma}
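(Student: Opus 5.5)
We reduce the multivariate product to one univariate product by Kronecker substitution, and evaluate that univariate product with the standard FFT of \cite[Chapter~30]{cormenetal}. The product $R(\vecz)=P(\vecz)\cdot Q(\vecz)$ has degree at most $2\Gamma$ in each variable $z_i$. Set $N \coloneq 2\Gamma+1$. We apply the substitution $z_i \mapsto y^{N^{i-1}}$ for $i=1,\dots,m$, which sends the monomial $\vecz^{\vect}=z_1^{t_1}\cdots z_m^{t_m}$ to $y^{\sum_i t_i N^{i-1}}$. Let $\hat P(y)$, $\hat Q(y)$, $\hat R(y)$ be the resulting univariate polynomials. Their degrees are below $N^m=O(\Gamma)^m$, and writing down $\hat P,\hat Q$ from the monomial lists of $P,Q$ costs $O(\Gamma)^m$ operations.

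First we check that the substitution is compatible with multiplication and invertible on $R$. It is a ring homomorphism, so $\hat R=\hat P\hat Q$. Each exponent vector $\vect$ of $R$ satisfies $0\le t_i\le 2\Gamma<N$, so $\sum_i t_i N^{i-1}$ is just the base-$N$ representation of $\vect$. This map is injective on $\{0,\dots,N-1\}^m$, so no two monomials of $R$ collapse. We can therefore read off the coefficients of $R$ from those of $\hat R$ by converting each exponent of $\hat R$ back to base $N$, which takes $O(m)$ per coefficient.

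Next we compute $\hat P\hat Q$ with univariate FFT. For polynomials of degree $d=O(\Gamma)^m$ this costs $O(d\log d)=O(\Gamma)^m\cdot m\log\Gamma$ operations. The extra factor $m$ is absorbed by enlarging the base constant of $O(\Gamma)^m$, since $m\le 2^m$. The decoding step adds $O(m)\cdot O(\Gamma)^m$, which is absorbed in the same way. The total is $O(\Gamma)^m\log\Gamma$.

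The only step needing care is the injectivity, i.e., avoiding wrap-around carries between the digits of different variables. Choosing the base $N=2\Gamma+1$ rather than $\Gamma+1$ handles this, because it accounts for the doubling of degrees in the product. Alternatively, one can carry out an $m$-dimensional FFT directly, applying one-dimensional FFTs of length $N$ along each coordinate (Exercise 30-3). This gives $m\cdot N^{m}\log N$ operations, with the same absorption of $m$.
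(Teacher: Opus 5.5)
Your proof is correct, but your main route differs from the one the paper points to. The paper gives no argument of its own. It cites \cite[Chapter~30]{cormenetal} together with Exercise~30-3, i.e., the genuinely multidimensional DFT: zero-pad each coordinate to length at least $2\Gamma+1$ and apply one-dimensional FFTs along each of the $m$ axes in turn. That is exactly the alternative in your last sentence.

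Your primary argument instead uses Kronecker substitution with base $N=2\Gamma+1$ to collapse the problem to a single univariate product of length $N^m$. The carry-free choice of base plays the role that per-coordinate zero-padding, which prevents cyclic wrap-around, plays in the multidimensional DFT. Both routes cost $O(N^m\log N^m)=O(\Gamma)^m\cdot m\log\Gamma$ operations, and the factor $m$ is absorbed into $O(\Gamma)^m$ in the same way. The paper itself keeps this factor $m$ visible when it applies the lemma in \Cref{cl:time}.

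What your Kronecker route buys is modularity. Any univariate multiplication routine can be used as a black box, for instance a number-theoretic transform when exact integer arithmetic is wanted, at the modest price of the encoding and decoding steps you account for. The multidimensional DFT works directly on the $m$-dimensional coefficient array and needs no index conversion.

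One small remark applies equally to both routes and to the statement itself. For $\Gamma\le 1$ the stated bound degenerates because $\log\Gamma$ vanishes, so $\log\Gamma$ should be read as, say, $\log(\Gamma+2)$. Your argument is otherwise complete.
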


The goal of this section is to prove the following statement.

\begin{lemma}\label{thm:packing}
    There exists an algorithm that computes $|\ILP(A,\vecb)|$ with $\Oh(n\Delta)^{m}  \log(n\Delta) \log{(n\Delta\norm{\vecb}_\infty)}$ arithmetic operations when $A \in \{0,\ldots,\Delta\}^{m \times n}$.
\end{lemma}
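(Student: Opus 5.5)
The plan is to work with the generating function
\[
F(\vecz) \;=\; \prod_{i=1}^n \frac{1}{1-\vecz^{\veca_i}} \;=\; \sum_{\vecx \in \Z_{\ge 0}^n} \vecz^{A\vecx},
\]
where $\veca_i$ denotes the $i$-th column of $A$. The quantity $|\ILP(A,\vecb)|$ is the coefficient of $\vecz^{\vecb}$ in $F$. Zero columns are handled first: if a zero column exists and some solution exists, the count is infinite, so we may assume every column is nonzero. Every monomial $\vecz^{\veca_i}$ then has positive total degree, so $F$ is a well-defined power series. We cannot expand $F$ up to degree $\norm{\vecb}_\infty$, which would cost polynomially in $\norm{\vecb}_\infty$. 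Instead we use a binary-digit (Graeffe-type / "halving") recursion.

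The key identity is as follows. For each factor, multiply numerator and denominator by $1+\vecz^{\veca_i}$:
\[
\frac{1}{1-\vecz^{\veca_i}} = \frac{1+\vecz^{\veca_i}}{1-\vecz^{2\veca_i}}.
\]
Iterating $k$ times gives
\[
F(\vecz) = P_k(\vecz)\cdot F(\vecz^{2^k}), \qquad P_k(\vecz)=\prod_{i=1}^n\prod_{j=0}^{k-1}\bigl(1+\vecz^{2^j\veca_i}\bigr).
\]
This identity alone is too large, since $P_k$ has degree about $2^k n\Delta$. The algorithm should therefore do the halving one step at a time, keeping only a window of relevant coefficients.

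We maintain a target vector $\vecb^{(t)}$ and a polynomial $Q_t(\vecz)$ of degree less than $n\Delta$ per variable such that
\[
|\ILP(A,\vecb)| = [\vecz^{\vecb^{(t)}}]\; Q_t(\vecz)\,F(\vecz).
\]
Initially $Q_0=1$ and $\vecb^{(0)}=\vecb$. For one step we write $F(\vecz)=P_1(\vecz)F(\vecz^2)$ with $P_1=\prod_i(1+\vecz^{\veca_i})$, which has degree at most $n\Delta$ per variable. We then compute $R=Q_tP_1$ using \Cref{lem:fft} in $\Oh(n\Delta)^m\log(n\Delta)$ operations. Since $F(\vecz^2)$ has only even exponents, the coefficient of $\vecz^{\vecb^{(t)}}$ in $R\,F(\vecz^2)$ only sees the monomials $\vecz^{\vecc}$ of $R$ with $\vecc\equiv\vecb^{(t)}\pmod 2$. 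Set $\vecr=\vecb^{(t)}\bmod 2\in\{0,1\}^m$. We keep those monomials, substitute $\vecz^{\vecc}\mapsto\vecz^{(\vecc-\vecr)/2}$ to form $Q_{t+1}$, and set $\vecb^{(t+1)}=(\vecb^{(t)}-\vecr)/2$. Then $[\vecz^{\vecb^{(t)}}]R(\vecz)F(\vecz^2)=[\vecz^{\vecb^{(t+1)}}]Q_{t+1}(\vecz)F(\vecz)$.

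The degree of $Q_{t+1}$ is at most $(\deg Q_t + n\Delta)/2$, so degree below $n\Delta$ is preserved. Because $A\ge 0$, we also discard every monomial with an exponent exceeding the current target, since $F$ has only nonnegative exponents. After $t=\Oh(\log\norm{\vecb}_\infty)$ steps we reach $\vecb^{(t)}=\veczero$. The answer is then $\sum_{\vecc}[\vecz^{\vecc}]Q_t\cdot[\vecz^{\veczero-\vecc}]F$, and since $F$ has constant term $1$ and nonnegative exponents, this is just the constant coefficient of $Q_t$.

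The total cost is $\Oh(\log\norm{\vecb}_\infty)$ FFT multiplications, each with $\Gamma=\Oh(n\Delta)$. This gives $\Oh(n\Delta)^m\log(n\Delta)\log(\norm{\vecb}_\infty)$ operations, within the claimed bound. The main point to check carefully is the invariant step: that the parity filtering and halving exactly preserve the target coefficient, and that the degree of $Q_t$ never exceeds $n\Delta$. I expect this bookkeeping to be the main obstacle, together with justifying the manipulations of the power series, which are valid because all columns are nonzero and nonnegative.
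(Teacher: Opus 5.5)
Your approach is the same as the paper's. Your $Q_t$ plays the role of the paper's $P$, and your $P_1=\prod_i(1+\vecz^{\veca_i})$ is the paper's $Q$. The step ``multiply, keep the monomials with exponent $\equiv \vecb^{(t)} \pmod 2$, halve the exponents and the target'', with base case ``return the constant term'', is exactly the paper's procedure \textsc{Coeff}. Your invariant and degree bookkeeping are correct.

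\textbf{Missing step: the cost of computing $P_1$.} Your final count of $\Oh(\log\norm{\vecb}_\infty)$ FFT multiplications leaves this out, and it is not free. Multiplying the $n$ binomials one after another costs about $\sum_{k\le n}(k\Delta)^m\approx n^{m+1}\Delta^m/(m+1)$ operations. That is an extra factor of order $n/m$, which is not within the claimed bound. The paper handles this with a balanced product tree combined with \Cref{lem:fft}, giving $\Oh(n\Delta)^m\log^2(n\Delta)$ operations (\Cref{clm:Qz}). This preprocessing term is why the statement has $\log(n\Delta\norm{\vecb}_\infty)$ rather than $\log\norm{\vecb}_\infty$. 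You should add this step to your analysis.

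\textbf{Minor point.} You should also check $\vecb\ge\veczero$ up front, returning $0$ otherwise. For a negative coordinate, your halving map $b\mapsto(b-(b\bmod 2))/2$ settles at $-1$ and never reaches the base case $\vecb^{(t)}=\veczero$.
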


In this case, we use the characteristic function 
approach which was already used to simplify algorithms for the related Subset Sum
problem~\cite{jin2019simple}. 
\begin{proof}[Proof of~\cref{thm:packing}]
    If some coordinate of $\vecb$ is negative, then clearly
    $\ILP(A,\vecb)=\emptyset$. If $A$ contains any zero columns, we delete them and return
    $\infty$ if the algorithm detects at least one solution.
    Hence, for the remainder of the proof, we may assume that every column of
    $A$ is nonzero and $\vecb\in\mathbb{Z}_{\geq 0}^m$.

    The reader is invited to assume $m=1$ on the first read. For arbitrary $m \ge 1$, we use $\vecz \in \mathbb{Z}^m$ to denote $m$ variables $(z_1,\ldots,z_m)$.
    Let $F(\vecz) = F(z_1,\ldots,z_m)$ be a generating function of $\ILP(A,\vecb)$, more precisely:
    \begin{align*}
        F(\vecz) &\coloneqq \sum_{\vecb \in \mathbb{Z}^m} |\ILP(A,\vecb)| \cdot \vecz^{\vecb},
    \end{align*}
    where $\vecz^{\vecb} = z_1^{b_1} \cdots z_m^{b_m}$. Our goal is to compute the coefficient of $\vecz^{\vecb}$ in $F(\vecz)$ denoted by $\iv{\vecz^{\vecb}}\, F(\vecz)$.
    Towards this goal, 
    let $\veca_i \in \{0,\ldots,\Delta\}^m$ be the $i$-th column of $A$, and define the polynomial $$Q(\vecz) \coloneqq \prod_{i=1}^n \mleft(1+\vecz^{\veca_i}\mright).$$ 
    Let $\Gamma \coloneqq n\Delta$ be an upper bound on the degree of each variable $z_i$ in $Q(\vecz)$. 
    We observe the following.
    
     \begin{claim}\label{clm:correctness}
            $F(\vecz) = \prod_{k \geq 0} Q(\vecz^{2^k}) = Q(\vecz) \cdot F(\vecz^2)$.
    \end{claim}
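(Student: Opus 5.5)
The plan is to prove the first equality, $F(\vecz)=\prod_{k\ge 0}Q(\vecz^{2^k})$, by a bijection on monomials. The second equality then follows formally. Throughout I work in the ring of formal power series $\mathbb{Z}[[z_1,\ldots,z_m]]$. Since every column $\veca_i$ is nonzero and nonnegative, each factor $Q(\vecz^{2^k})-1$ contains only monomials of total degree at least $2^k$. Hence the infinite product converges formally: any fixed monomial $\vecz^{\vecb}$ receives contributions from only finitely many $k$, namely those with $2^k\le\norm{\vecb}_1$.

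First I will rewrite $F$ as a product of geometric series. By definition
\[
F(\vecz)=\sum_{\vecx\in\mathbb{Z}_{\ge 0}^n}\vecz^{A\vecx}=\prod_{i=1}^n\sum_{x_i\ge 0}\vecz^{x_i\veca_i}.
\]
Each $\vecx\in\mathbb{Z}^n_{\ge0}$ contributes exactly one monomial $\vecz^{A\vecx}$, so the coefficient of $\vecz^{\vecb}$ is $|\ILP(A,\vecb)|$. This needs the zero-column removal above, because otherwise the coefficients would be infinite.

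Next I will expand the product $\prod_{k\ge0}Q(\vecz^{2^k})=\prod_{k\ge0}\prod_{i=1}^n(1+\vecz^{2^k\veca_i})$. A term of this expansion is a choice of a finite set $S_i\subseteq\mathbb{Z}_{\ge0}$ for each $i$, where $S_i$ records the $k$'s at which the factor $\vecz^{2^k\veca_i}$ is chosen. The resulting monomial is $\prod_i\vecz^{(\sum_{k\in S_i}2^k)\veca_i}$. By uniqueness of binary representation, finite sets $S_i$ correspond bijectively to nonnegative integers $x_i=\sum_{k\in S_i}2^k$. Therefore
\[
\prod_{k\ge0}\bigl(1+\vecz^{2^k\veca_i}\bigr)=\sum_{x_i\ge0}\vecz^{x_i\veca_i},
\]
which is the classical identity $\prod_k(1+t^{2^k})=1/(1-t)$ applied with $t=\vecz^{\veca_i}$. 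Multiplying these identities over $i$ and comparing with the first step gives the first equality.

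Finally, since $F(\vecz^2)=\prod_{k\ge0}Q(\vecz^{2^{k+1}})=\prod_{k\ge1}Q(\vecz^{2^k})$, splitting off the $k=0$ factor yields $F(\vecz)=Q(\vecz)\,F(\vecz^2)$.

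The only delicate point is justifying the rearrangement of the infinite product, i.e., the formal convergence. This is exactly where nonzero, nonnegative columns are used: they guarantee that each coefficient is a finite sum. I would state this explicitly before the bijection argument.
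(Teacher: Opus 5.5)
Your proposal is correct and follows essentially the same route as the paper: write $F$ as a product of geometric series over the columns, apply the binary-expansion identity $\sum_{k\ge 0}x^k=\prod_{k\ge 0}(1+x^{2^k})$ with $x=\vecz^{\veca_i}$, exchange the two products, and split off the $k=0$ factor. Your explicit argument that the infinite product converges formally, which uses that every column is nonzero and nonnegative, is a worthwhile addition; the paper leaves that point implicit.
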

    \begin{claimproof}
    By the definition of $F(\vecz)$ and considering that $x_i = k_i$ is a solution to $A\vecx = \sum_i \veca x_i$ we have:
    \begin{align*}
        F(\vecz) &= \prod_{i=1}^n \sum_{k_i \ge 0} \mleft(\vecz^{\veca_{i}}\mright)^k_i 
    \end{align*}
    Next, we use the following identity that follows by the uniqueness of the binary expansion:
    \begin{align*}
        \sum_{k \ge 0} x^k &= \prod_{k\ge 0} \mleft(1+x^{2^k}\mright).\nonumber
    \end{align*}
    This means that $F(\vecz)$ can be represented in the following form
    \begin{align*}
        F(\vecz) &= \prod_{i=1}^n \prod_{k\ge 0} \mleft(1+(\vecz^{\veca_i})^{2^k}\mright)
                = \prod_{k\ge 0} \prod_{i=1}^n \mleft(1+\vecz^{\veca_i \cdot 2^k}\mright) = \prod_{k \geq 0} Q(\vecz^{2^k}).
    \end{align*}
    By taking out one factor of $Q(\vecz)$ from this product, the final equality immediately follows:
        \[F(\vecz) = Q(\vecz) \cdot F(\vecz^2).\qedhere\]
    \end{claimproof}

From the above claim, we see that the coefficients of $F(\vecz)$ are functions of the coefficients of $Q(\vecz)$. In particular, $Q(\vecz)$ contains all the information about $F(\vecz)$. 

We now present our recursive algorithm for computing coefficients of $F(\vecz)$. Due to the nature of the recursive subproblems, the algorithm solves the more general problem of computing a target coefficient of the product of $F(\vecz)$ with another polynomial $P(\vecz)$, where the original problem corresponds to $P(\vecz)=1$. The input to the algorithm is the target coefficient $\vecb \in \Z^m_{\geq 0}$, and the list of coefficients of $P(\vecz)$ and $Q(\vecz)$ (recalling that $Q(\vecz)$ represents $F(\vecz)$ implicitly).
    
\begin{algorithm}[ht!]
\caption{Procedure \textsc{Coeff}: recursive coefficient extraction for \cref{clm:recursion}}
\label{alg:recursion}
\SetArgSty{upshape}
\SetKwFunction{Coeff}{\textnormal{\textsc{Coeff}}}
\KwIn{Polynomials $P(\vecz)$ and $Q(\vecz)$, and a vector $\vecb\in\mathbb{Z}_{\geq 0}^m$}
\KwOut{The coefficient $\iv{\vecz^{\vecb}}\,P(\vecz)\cdot F(\vecz)$}
\If{$\vecb=\mathbf{0}$}{
  \Return{$P(\mathbf{0})$}
}
Compute the monomials of $P(\vecz) \cdot Q(\vecz)$ via FFT (\Cref{lem:fft})\\
Group the monomials of $P(\vecz) \cdot Q(\vecz)$ by parity of exponents to compute the decomposition $P(\vecz)\cdot Q(\vecz)=\sum_{\vecs\in\{0,1\}^m}\vecz^{\vecs}\,V_{\vecs}(\vecz^2)$\\
$\vecr\gets(b_1\bmod 2,\ldots,b_m\bmod 2)$\\
\Return{$\Coeff(V_{\vecr}(\vecz),Q(\vecz),(\vecb-\vecr)/2)$}\\
\end{algorithm}
    \begin{claim}\label{clm:recursion}
        Let $P(\vecz)$ be an $m$-variate polynomial where each variable has degree at most $\Gamma$. Then, given $\vecb \in \mathbb{Z}_{\geq 0}^m$ and the monomials of $P(\vecz)$ and $Q(\vecz)$, we can compute:
        \begin{align*}
            \iv{\vecz^{\vecb}} \,P(\vecz) \cdot F(\vecz)
        \end{align*}
        with $\Oh(\Gamma)^m \cdot \log(\Gamma) \cdot \log\norm{\vecb}_\infty$ arithmetic operations.
    \end{claim}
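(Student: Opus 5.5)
The plan is to prove \Cref{clm:recursion} by analyzing \Cref{alg:recursion}, using induction on $\norm{\vecb}_\infty$ to show correctness and then bounding the cost of each recursion level. There are three ingredients: the functional equation $F(\vecz)=Q(\vecz)\cdot F(\vecz^2)$ from \Cref{clm:correctness}, a parity decomposition of $P\cdot Q$, and an invariant that the degree of $P$ stays at most $\Gamma$ in every variable.

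\textbf{Base case.} If $\vecb=\mathbf{0}$, every monomial of $F$ other than the constant one has a positive exponent in some coordinate. This is because all columns of $A$ are nonzero and nonnegative. Moreover $F(\mathbf 0)=|\ILP(A,\mathbf 0)|=1$. Since $P$ has only nonnegative exponents, $\iv{\vecz^{\mathbf 0}}P\cdot F=P(\mathbf 0)$.

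\textbf{Recursive step.} For $\vecb\neq\mathbf 0$, we write
\[
P(\vecz)F(\vecz)=P(\vecz)Q(\vecz)F(\vecz^2)=\sum_{\vecs\in\{0,1\}^m}\vecz^{\vecs}\,V_{\vecs}(\vecz^2)F(\vecz^2).
\]
The factor $V_{\vecs}(\vecz^2)F(\vecz^2)$ contains only even exponents. Hence a monomial $\vecz^{\vecb}$ can only arise from the term with $\vecs=\vecr$, the parity vector of $\vecb$. Substituting $\vecz^2\mapsto\vecz$ then gives
\[
\iv{\vecz^{\vecb}}P F=\iv{\vecz^{(\vecb-\vecr)/2}}\,V_{\vecr}(\vecz)F(\vecz).
\]
This is exactly the recursive call, and $(\vecb-\vecr)/2\in\Z^m_{\ge0}$.

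\textbf{Degree invariant and termination.} The product $PQ$ has degree at most $2\Gamma$ per variable, so $V_{\vecr}$ has degree at most $\lfloor 2\Gamma/2\rfloor=\Gamma$ per variable. The hypothesis of the claim is therefore preserved. The new target satisfies $\norm{(\vecb-\vecr)/2}_\infty\le\lfloor\norm{\vecb}_\infty/2\rfloor$, so the recursion reaches $\mathbf 0$ after at most $\lfloor\log_2\norm{\vecb}_\infty\rfloor+1$ calls. Induction on $\norm{\vecb}_\infty$ then gives correctness.

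\textbf{Running time.} In each call, the FFT of \Cref{lem:fft} is applied to two polynomials of degree at most $\Gamma$ per variable and costs $\Oh(\Gamma)^m\log\Gamma$. Grouping the at most $(2\Gamma+1)^m$ monomials of $PQ$ by parity costs $\Oh(\Gamma)^m$. Summing over $\Oh(\log\norm{\vecb}_\infty)$ calls gives the stated bound; the case $\norm{\vecb}_\infty=1$ is absorbed by reading $\log$ as $\max(1,\log)$.

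The only delicate points are two details that the argument must not skip. The first is the degree invariant, which guarantees that each FFT works with degree bound $\Gamma$ rather than a bound that grows along the recursion. The second is the base case, which uses that the columns of $A$ are nonzero so that $F(\mathbf 0)=1$ and no other monomial of $F$ contributes to $\vecz^{\mathbf 0}$.
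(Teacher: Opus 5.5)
Your proposal is correct and follows essentially the same route as the paper: the identity $F(\vecz)=Q(\vecz)\,F(\vecz^2)$, the parity decomposition of $P(\vecz)Q(\vecz)$ into the $V_{\vecs}$, the fact that each $V_{\vecs}$ again has degree at most $\Gamma$ per variable, halving of the target, and the base case $P(\mathbf{0})$, which holds because $A$ is nonnegative with no zero columns. Your explicit bound $\norm{(\vecb-\vecr)/2}_\infty\le\lfloor\norm{\vecb}_\infty/2\rfloor$ and your reading of $\log$ as $\max(1,\log)$ just make the paper's remark that ``$\norm{\vecb}_\infty$ halves'' slightly more precise.
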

    \begin{claimproof}
        By grouping monomials of $P(\vecz)\cdot Q(\vecz)$ we can represent it by uniquely determined polynomials $V_{\vecs}(\vecz)$ as follows:
        \begin{align}\label{eq:representation}
            P(\vecz) \cdot Q(\vecz) = \sum_{\vecs \in \{0,1\}^m} \vecz^{\vecs} \cdot V_{\vecs}(\vecz^2).
        \end{align}
        As $P(\vecz) \cdot Q(\vecz)$ is a polynomial of degree at most $2\Gamma$ per variable, each polynomial $V_{\vecs}(\vecz)$ for $\vecs \in \{0,1\}^m$ has degree at most $\Gamma$ per variable. Next, we use~\cref{clm:correctness} to notice that:
        \begin{align*}
            P(\vecz) \cdot F(\vecz) & = P(\vecz) \cdot Q(\vecz) \cdot F(\vecz^2) 
            = \sum_{\vecs \in \{0,1\}^m } \vecz^{\vecs} \cdot V_{\vecs}(\vecz^2) \cdot F(\vecz^2).
\intertext{Let $\vecr \in \{0,1\}^m$ be such that $b_i \equiv r_i \,(\text{mod } 2) \text{ for all }  1 \le i \le m$. Then:
}
            \iv{\vecz^{\vecb}}\, P(\vecz) \cdot F(\vecz) &= 
            \iv{\vecz^{\vecb}}\, \sum\nolimits_{\vecs \in \{0,1\}^m} \vecz^{\vecs} \cdot V_{\vecs}(\vecz^2)\cdot F(\vecz^2) \\
            &=\sum\nolimits_{\vecs \in \{0,1\}^m} \iv{\vecz^{\vecb - \vecs}}\, V_{\vecs}(\vecz^2)\cdot F(\vecz^2) \\
            &=\iv{\vecz^{\vecb-\vecr}}\,\, V_{\vecr}(\vecz^2)\cdot F(\vecz^2) \tag{Only even $\vecb - \vecs$ contribute}\\
            &=\iv{\vecz^{(\vecb-\vecr)/2}}\, V_{\vecr}(\vecz) \cdot F(\vecz).
        \end{align*}
        Observe that $V_{\vecr}(\vecz)$ is an $m$-variate polynomial where each variable has degree at most $\Gamma$. So,
        we can compute the representation in~\eqref{eq:representation} of $V_{\vecr}(\vecz)$ using $\Oh((\Gamma+1)^m \log((\Gamma + 1)^m)) \le \Oh(\Gamma)^m \cdot \log{\Gamma}$ arithmetic operations with FFT (\Cref{lem:fft}) and recursively compute $\iv{\vecz^{(\vecb-\vecr)/2}}\, V_{\vecr}(\vecz) F(\vecz)$.

        We summarize the procedure in \cref{alg:recursion}.
        In each step of the recursion $\norm{\vecb}_\infty$ halves, so the depth
        of the recursion is $\Oh(\log \norm{\vecb}_\infty)$. In the base case, we
        conclude that for $\vecb = \mathbf{0}$ we have $\iv{\vecz^{\mathbf{0}}}
        \, P(\vecz) \cdot F(\vecz)= P(\mathbf{0})$. Indeed, since $A$ is
        non-negative and has no zero column, $A\vecx=\mathbf{0}$ implies
        $\vecx=\mathbf{0}$.
    \end{claimproof}

    Now notice that we can compute the monomials of $Q(\vecz)$ as follows.
    \begin{claim}\label{clm:Qz}
        The monomials of $Q(\vecz)$ can be computed in $\Oh(\Gamma)^m \cdot \log^2\Gamma$ arithmetic operations.
    \end{claim}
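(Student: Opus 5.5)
We compute $Q(\vecz)=\prod_{i=1}^n(1+\vecz^{\veca_i})$ by divide and conquer over the columns, using a balanced binary product tree. We split the index set $\{1,\ldots,n\}$ into two halves of sizes $\lceil n/2\rceil$ and $\lfloor n/2\rfloor$ and recursively compute
\[
Q_L(\vecz)=\prod_{i\le \lceil n/2\rceil}(1+\vecz^{\veca_i})
\quad\text{and}\quad
Q_R(\vecz)=\prod_{i>\lceil n/2\rceil}(1+\vecz^{\veca_i}).
\]
We then multiply the two halves with \Cref{lem:fft}. The leaves of the tree are the binomials $1+\vecz^{\veca_i}$, each with two monomials. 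The key point is degree control. A product over a set $S$ of columns has degree at most $|S|\Delta$ in each variable, because $A$ is non-negative with entries at most $\Delta$. So a node at depth $j$ of the tree, which covers about $n/2^j$ columns, has per-variable degree at most $\Gamma_j\coloneqq \lceil n/2^j\rceil\Delta$.

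Next we bound the cost level by level. At depth $j$ there are $2^j$ multiplications, each of polynomials of per-variable degree at most $\Gamma_{j+1}\le\Gamma_j$. By \Cref{lem:fft} each costs $\Oh(\Gamma_j)^m\log\Gamma_j$. The total work at depth $j$ is therefore
\[
2^j\cdot \Oh\bigl(n\Delta/2^j+\Delta\bigr)^m\log\Gamma
\;\le\; 2^j\,\Oh(\Gamma/2^j)^m\log\Gamma
\;\le\;\Oh(\Gamma)^m\log\Gamma,
\]
using $m\ge1$, so that $2^j\cdot 2^{-jm}\le 1$. We absorb the ceiling into the constant via $\lceil n/2^j\rceil\le 2n/2^j$. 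Summing over the $\lceil\log_2 n\rceil\le\Oh(\log\Gamma)$ levels gives $\Oh(\Gamma)^m\log^2\Gamma$ arithmetic operations, as claimed.

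The only delicate step is not losing the factor $2^j$ from the number of nodes at each level. This is where we use that the degrees at a node scale with the number of columns it covers. A naive sequential product $((1+\vecz^{\veca_1})(1+\vecz^{\veca_2}))\cdots$ would cost $n\cdot\Oh(\Gamma)^m\log\Gamma$, which is too much. Constants hidden in $\Oh(\cdot)^m$ are harmless because they are uniform across levels, and the trivial case $n=1$ is handled directly.
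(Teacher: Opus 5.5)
Your proof is correct and follows the paper's own argument: a balanced divide-and-conquer product tree over the columns, with siblings multiplied by FFT, using that a node covering $k$ columns has per-variable degree at most $k\Delta$, so each of the $\Oh(\log n)$ levels costs $\Oh(n\Delta)^m\log(n\Delta)$. The only difference is cosmetic: you index levels from the root ($2^j$ nodes of degree $\lceil n/2^j\rceil\Delta$), whereas the paper indexes from the leaves ($2^{\lceil\log n\rceil-\lambda}$ nodes of degree $2^\lambda\Delta$).
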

    \begin{claimproof}
        We recursively split the set of columns $\{1, \dots, n\}$ into two
        halves, compute the product of the terms in each half, and then multiply
        the two resulting polynomials.
        
        This yields a binary tree of multiplications with depth $\Oh(\log n)$.
        For each level $\lambda \in \{0,\ldots,\ceil{\log{n}}\}$ of the tree, the
        degree per variable of the resulting polynomials is at most $2^\lambda \cdot \Delta$, and
        the number of polynomials at level $\lambda$ is at most
        $2^{\ceil{\log{n}}-\lambda}$. Using FFT~(\Cref{lem:fft}), each multiplication takes
        $\Oh(2^{\lambda} \Delta)^m \cdot \log(n \Delta)$ operations.
        Summing over all levels, the total cost is bounded by 
        \begin{equation*}
            \sum_{\lambda=0}^{\lceil\log n\rceil} 2^{\ceil{\log{n}}-\lambda}
            \cdot \Oh(2^{\lambda}  \Delta)^m \cdot \log(n \Delta) \le
            \Oh(n\Delta)^m \cdot \log^2(n\Delta).\qedhere
        \end{equation*}
    \end{claimproof}

    Hence, to compute $\iv{\vecz^{\vecb}}\, F(\vecz)$, we first compute $Q(\vecz)$ using \cref{clm:Qz} and apply~\cref{clm:recursion} with $P(\vecz) = 1$. In total the number of arithmetic operations is 
    \begin{equation*}
        \Oh(\Gamma)^m \cdot \log \Gamma \cdot \log \norm{\vecb}_\infty +  \Oh(\Gamma)^m \cdot \log^2\Gamma \leq \Oh(n\Delta)^{m} \cdot \log(n\Delta) (\log\norm{\vecb}_\infty + \log(n\Delta)).\qedhere
    \end{equation*}
\end{proof}

Notice that we used the assumption that $A$ is non-negative only in
the base case of the recursion, where we trivially have $\iv{\vecz^{\mathbf{0}}}\,
F(\vecz) = 1$. This is not necessarily $1$ (or even finite!) in the case when
$A$ has negative entries. In the next section, we show how to handle this using
the Steinitz lemma.

\section{Proof of~\cref{thm:main}}\label{sec:general}

We will require the following bound on the size of solutions to an integer program. At a high level, the bound will be used to prove that a feasible integer program always has short solutions, and that long solutions only exist if there are infinitely many solutions.

\begin{lemma}\label{cor:steinitz-apply}
    Let $\vect \in \mathbb{Z}^m$ such that there is a solution $\vecx \in\mathbb{Z}^n_{\ge 0}$  to $A\vecx = \vect$ with \[\norm{\vecx}_1 > \Gamma_{\vect} \coloneqq (\norm{\vect}_\infty + 4m\Delta + 1)^m.\] 
    Then, the following holds:
    \begin{enumerate}
        \item There is a nonzero solution $\vecx' \in \mathbb{Z}^n_{\ge 0}$ to $A\vecx' = \vect$ with $\norm{\vecx'}_1 \le \Gamma_{\vect}$.
        \item There is a nonzero $\vecy \in \mathbb{Z}^n_{\ge 0}$ with $A\vecy = \veczero$ with $\norm{\vecy}_1 \leq \Gamma_{\veczero}$.
    \end{enumerate}
\end{lemma}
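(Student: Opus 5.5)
We will use the Steinitz lemma in the form of \cite{eisenbrand2019proximity}: if $\vecv_1,\ldots,\vecv_N \in \mathbb{R}^m$ satisfy $\norm{\vecv_j}_\infty \le \Delta'$ and $\sum_j \vecv_j = \veczero$, then there is a permutation $\pi$ such that every prefix sum $\sum_{j\le k}\vecv_{\pi(j)}$ has $\infty$-norm at most $m\Delta'$.

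\textbf{Setting up the sequence.} Let $\vecx$ be a solution with $N \coloneqq \norm{\vecx}_1 > \Gamma_{\vect}$. Write $A\vecx=\vect$ as a sum of $N$ columns $\vecv_1,\ldots,\vecv_N$, each $\vecv_j$ a column $\veca_i$ repeated $x_i$ times. To obtain a zero-sum sequence we must absorb $\vect$, whose norm may exceed $\Delta$. We therefore split $-\vect$ into $L \coloneqq \lceil \norm{\vect}_\infty/\Delta\rceil$ integer vectors of $\infty$-norm at most $\Delta$, e.g.\ near-equal parts of $-\vect$ rounded coordinatewise. The combined sequence has zero sum and entries of norm at most $\Delta$. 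Steinitz then orders it so that every prefix sum lies in $\{-m\Delta,\ldots,m\Delta\}^m$.

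\textbf{Pigeonhole.} Consider the partial sums at the moments right after a column $\vecv_j$ (not a piece of $-\vect$) is placed. To make the count of vectors track the column count, we record for each such prefix the pair consisting of the prefix sum and the number of pieces of $-\vect$ used so far, which lies in $\{0,\ldots,L\}$. Equivalently, we track the prefix sum of the columns alone, which equals the total prefix plus the pieces used so far and therefore lies in a box of side roughly $\norm{\vect}_\infty + 2m\Delta$.

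A cleaner way to count is to record the column-only prefix sums $\vecs_k$, $k=0,\ldots,N$, in the order induced by the permutation. Each $\vecs_k$ differs from an element of $\{-m\Delta,\ldots,m\Delta\}^m$ by a partial sum of pieces of $-\vect$. With near-equal pieces, that partial sum is close to $\lambda\vect$ with $\lambda\in[0,1]$, so $\norm{\vecs_k}_\infty \le \norm{\vect}_\infty + m\Delta + O(1)$. This is the step I expect to need the most care: choosing the pieces so that their partial sums stay within $\norm{\vect}_\infty$ plus a small slack, keeping everything inside a box of at most $(\norm{\vect}_\infty+4m\Delta+1)^m = \Gamma_{\vect}$ lattice points. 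If the box is only bounded by $2\norm{\vect}_\infty+\ldots$, I would instead use pieces adapted to the prefix structure, or use that the column prefixes are monotone along the direction of $\vect$.

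\textbf{Conclusion.} Since $N+1 > \Gamma_{\vect}$, two prefixes $\vecs_k = \vecs_{k'}$ with $k<k'$ coincide. The columns between positions $k$ and $k'$ then form a nonzero $\vecy\in\mathbb{Z}^n_{\ge0}$ with $\vecy\le\vecx$ and $A\vecy=\veczero$.

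For part 1, we set $\vecx \gets \vecx-\vecy$, which is again a solution with smaller $\ell_1$ norm, and repeat until $\norm{\vecx}_1 \le \Gamma_{\vect}$. The result is nonzero: if $\vect\ne\veczero$ this is automatic. If $\vect=\veczero$, we stop at the first iterate with norm at most $\Gamma_{\vect}$, which is reached from norm greater than $\Gamma_{\vect}$ by removing a cycle that, by choosing the first repetition, has size at most $\Gamma_{\veczero}$. We choose $k'$ minimal so that the removed part is small, which guarantees we do not overshoot to zero.

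For part 2, apply the same argument to $\vecy$ itself, with $\vect=\veczero$, repeatedly shortening until $\norm{\vecy}_1\le\Gamma_{\veczero}$. Choosing the first repeated pair already gives $\norm{\vecy}_1 \le (2m\Delta+1)^m \le \Gamma_{\veczero}$, since among the first $\Gamma_{\veczero}+1$ prefixes there is a repeat.
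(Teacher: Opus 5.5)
Your proof has the same skeleton as the paper's, and it works once one step is justified differently. The shared skeleton: order the columns by Steinitz, pigeonhole two equal prefix sums among the first $\Gamma_{\vect}+1$, and extract a zero-sum $\veczero\ne\vecy\le\vecx$ with $\norm{\vecy}_1\le\Gamma_{\vect}$. Then iterate for part~1 (the result stays nonzero because each removed cycle has norm at most $\Gamma_{\vect}$), and get part~2 by applying part~1 with $\vect=\veczero$ to that $\vecy$.

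Where you differ is how you confine the prefix sums. The paper quotes the tube version of Steinitz: prefix sums stay within $2m\Delta$ of the segment $[\veczero,\vect]$, which is what the zero-sum lemma gives for the vectors $\vecv_i-\vect/N$ of norm up to $2\Delta$. It then encloses the tube in a box with at most $(\norm{\vect}_\infty+4m\Delta+1)^m$ lattice points. You instead append $-\vect$ cut into $L$ integer pieces of norm at most $\Delta$ and apply the zero-sum lemma directly, so the slack is only $m\Delta$. This costs an extra construction, but done correctly it gives the sharper count $(\norm{\vect}_\infty+2m\Delta+1)^m\le\Gamma_{\vect}$.

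The counting step you flagged must rest on your first formulation, not the ``cleaner'' one. The Steinitz order interleaves the pieces arbitrarily, so the pieces placed before a given column form an arbitrary sub-multiset. Near-equality does not make such sub-sums close to some $\lambda\vect$. For example, take $\Delta=2$, $L$ even, and $\vect=(2L,\tfrac{3}{2}L)$. Coordinatewise rounding gives $L/2$ copies each of $(2,1)$ and $(2,2)$. The copies of $(2,1)$ alone sum to $(L,L/2)$, which is at $\infty$-distance $L/7$ from the segment.

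Two other parts of that paragraph also fail. The symmetric bound $\norm{\vecs_k}_\infty\le\norm{\vect}_\infty+m\Delta+O(1)$ only yields about $(2\norm{\vect}_\infty)^m$ points, which exceeds $\Gamma_{\vect}$. And column prefixes need not be monotone along $\vect$, since columns can point in any direction.

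What you actually need is sign-consistency. Choose the pieces so that the $c$-th entry of every piece lies between $0$ and $t_c$; coordinatewise rounding already does this. Then every sub-sum of pieces lies in $\prod_c[\min(0,t_c),\max(0,t_c)]$. Hence every column prefix $\vecs_k$ lies in $\prod_c[\min(0,t_c)-m\Delta,\max(0,t_c)+m\Delta]$, which has $\prod_c(|t_c|+2m\Delta+1)\le\Gamma_{\vect}$ integer points. With this, taking the first repetition among $\vecs_0,\dots,\vecs_{\Gamma_{\vect}}$ gives $\norm{\vecy}_1\le\Gamma_{\vect}$, and the rest of your argument goes through as written.
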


The bound can be derived from standard results in integer programming, in particular using Graver bases and conformal decompositions. To make the paper more self-contained, we give a direct proof using the Steinitz lemma, which was introduced as tool for proving related size bounds by Eisenbrand and Weismantel~\cite[Section~2]{eisenbrand2019proximity}. Intuitively, it states that a sequence of vectors that sum to a target $\vecb$ can be rearranged so that their partial sums remain in a tube of radius $2m\Delta$ around the line segment between $\veczero$ and $\vecb$.

\begin{lemma}[Steinitz lemma, tube version~\protect{\cite[Section 2]{eisenbrand2019proximity}}]\label{lem:steinitz}
    Let $\vect, \vecv_1, \dots, \vecv_n \in \mathbb{R}^m$ such that $\norm{\vecv_i}_\infty \le \Delta$ for all $i$ and $\sum_{i=1}^n\vecv_i = \vect$.
    Then there exists a permutation $\pi$ such that for all partial sums $k \in \{1, \dots, n\}$:
    \begin{equation}
    \label{eq:tube}
        \sum_{i=1}^k \vecv_{\pi(i)} \in \big\{\vecx : \exists j \in \{1,\dots,n\}, \norm{\vecx - (j/n)\vect}_\infty \le 2m\Delta \big\}.
    \end{equation}
\end{lemma}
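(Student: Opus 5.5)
The plan is to reduce \Cref{lem:steinitz} to the classical Steinitz lemma for vectors with zero sum, and to prove that classical statement, with constant $m$ in the $\ell_\infty$-norm, by the Grinberg--Sevastyanov linear-programming argument.

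\emph{Reduction.} Define $\vecu_i \coloneqq \vecv_i - \tfrac{1}{n}\vect$ for $i=1,\dots,n$. Then $\sum_i \vecu_i = \veczero$. Since $\vect=\sum_i\vecv_i$, we have $\norm{\vect}_\infty\le n\Delta$, and hence $\norm{\vecu_i}_\infty \le \Delta + \Delta = 2\Delta$.

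Suppose we find a permutation $\pi$ with $\norm{\sum_{i=1}^k \vecu_{\pi(i)}}_\infty \le m\cdot 2\Delta$ for every $k$. Then
\[
\sum_{i=1}^k \vecv_{\pi(i)} = \sum_{i=1}^k \vecu_{\pi(i)} + \tfrac{k}{n}\vect ,
\]
so this partial sum lies within $2m\Delta$ of $(k/n)\vect$. This is exactly \eqref{eq:tube} with the choice $j=k$. It therefore suffices to prove the following claim: if $\vecu_1,\dots,\vecu_n\in\mathbb{R}^m$ satisfy $\norm{\vecu_i}_\infty\le 1$ and $\sum_i\vecu_i=\veczero$, then some ordering has all partial sums of $\ell_\infty$-norm at most $m$. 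The general case follows by scaling by $2\Delta$.

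\emph{Proof of the zero-sum claim.} If $n\le m$ every partial sum trivially has norm at most $n\le m$, so assume $n>m$. We build index sets $S_n\supset S_{n-1}\supset\dots\supset S_m$ with $|S_k|=k$, together with weights $\lambda^{(k)}\in[0,1]^{S_k}$ satisfying
\[
\sum_{i\in S_k}\lambda^{(k)}_i = k-m \quad\text{and}\quad \sum_{i\in S_k}\lambda^{(k)}_i\vecu_i=\veczero .
\]
Start with $S_n=\{1,\dots,n\}$ and $\lambda^{(n)}\equiv (n-m)/n$; this works because $\sum_i\vecu_i=\veczero$.

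For the step from $k$ to $k-1$, consider the polytope of $\mu\in[0,1]^{S_k}$ with $\sum\mu_i=k-1-m$ and $\sum\mu_i\vecu_i=\veczero$. It is nonempty, since it contains $\tfrac{k-1-m}{k-m}\lambda^{(k)}$. Take a vertex $\mu$. It satisfies at least $k$ tight constraints, and only $m+1$ of these are equalities, so at least $k-m-1$ coordinates equal $0$ or $1$. If no coordinate were $0$, then at least $k-m-1$ coordinates would equal $1$ and the rest would be positive, forcing $\sum\mu_i>k-m-1$ (provided there are more than $k-m-1$ indices, which holds since $m\ge1$). That contradicts $\sum\mu_i=k-1-m$, so some $\mu_i=0$. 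Delete this index to obtain $S_{k-1}$ and set $\lambda^{(k-1)}=\mu$ restricted to $S_{k-1}$.

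The ordering is then as follows. Let $\pi(k)$ be the unique element of $S_k\setminus S_{k-1}$ for $k>m$, and order $S_m$ arbitrarily. For $k\ge m$,
\[
\sum_{i\in S_k}\vecu_i=\sum_{i\in S_k}\bigl(1-\lambda^{(k)}_i\bigr)\vecu_i ,
\]
whose norm is at most $\sum_{i\in S_k}(1-\lambda^{(k)}_i)=k-(k-m)=m$. For $k<m$ the partial sum has at most $k<m$ terms, so its norm is below $m$.

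\emph{Main obstacle.} The delicate step is the vertex-counting argument that guarantees a zero coordinate, and hence that $S_k$ can be shrunk by exactly one element while keeping feasibility. The reduction and the final norm bound are routine once this step is in place.
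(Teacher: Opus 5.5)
Your proof is correct and is essentially the standard derivation behind this result. The paper gives no proof of its own and cites Eisenbrand--Weismantel, where the tube form comes from recentering to $\vecv_i - \vect/n$ (of $\ell_\infty$-norm at most $2\Delta$) and applying the zero-sum Steinitz lemma with constant $m$; your nested LP-vertex argument for that lemma is the usual Grinberg--Sevastyanov proof.

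You in fact get the slightly stronger conclusion with $j=k$ in \eqref{eq:tube}. The only cosmetic fixes are to handle $\Delta=0$ separately before rescaling, and to say that a vertex has $k$ \emph{linearly independent} tight constraints.
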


\begin{proof}[Proof of \Cref{cor:steinitz-apply}]
    \begin{sloppypar}
    Examine the sequence of vectors derived by repeating the columns of $A$ the number of times they appear in $\vecx$.
    By the Steinitz lemma (\Cref{lem:steinitz}), we can construct a rearranged sequence $\vecv_1, \dots, \vecv_{\norm{\vecx}_1}$ that sums to $\vect$ and whose prefix sums never leave the tube~\eqref{eq:tube} around the line segment from $\veczero$ to $\vect$.
    Since the box $t/2 + [\pm(\norm{\vect}_\infty/2 + 2m \Delta)]^m$ contains the tube and has $\Gamma_{\vect}$ integer points, the number of integer points in the tube is at most $\Gamma_{\vect}$.

    Given that $\norm{\vecx}_1 > \Gamma_{\vect}$, by pigeon-hole we can find two distinct prefixes $k,k'$ of the Steinitz sequence with equal partial sums and $1 \leq k < k' \leq \Gamma_{\vect}+1$. The difference of their partial sums then satisfies $\veczero = \sum_{i=k}^{k'} \vecv_i := A \vecy$, where $0 \leq \vecy \leq \vecx$ and $0 < \norm{\vecy}_1 = k'-k \leq \Gamma_{\vect}$.

    To prove property~1, examine $\vecx' := \vecx-\vecy$. By construction, $\vecx' \neq \veczero$ is a non-negative integer solution to $A\vecz = \vect$ with strictly smaller $\norm{\cdot}_1$-norm than $\vecx$. Applying this procedure iteratively, replacing $\vecx$ with the shorter $\vecx'$ at each step, we may continue until $\norm{\vecx} \leq \Gamma_{\vect}$.

    To prove property~2, we invoke property~1 of the Lemma when applied against the initial non-zero $\vecy$ found in the second paragraph of the proof.
    \qedhere
    \end{sloppypar}
\end{proof}

To prove \Cref{thm:main}, we use a dynamic program that should be thought of as mostly equivalent to the polynomial approach in \Cref{sec:packing}, except running for more iterations. 
Rohwedder and W{\k{e}}grzycki~\cite{rohwedder2025fine} apply a similar dynamic programming technique for optimizing an ILP where variables are bounded. Their algorithm relies on proximity, which we avoid by using \Cref{cor:steinitz-apply} directly.

Unlike in the packing case in \Cref{thm:packing}, we need to be slightly more careful about checking if the answer is finite: It requires us to check feasibility and the existence of a \emph{cycle}, that is, property 2 of \Cref{cor:steinitz-apply}.
Both of these problems have standard solutions: A cycle can be found in polynomial time by solving an LP over the reals, and there are fast algorithms for checking feasibility~\cite{eisenbrand2019proximity}.
To achieve a precise value of $m$ in the exponent, we take a different route and do both of these checks with our dynamic programming techniques for counting (see \Cref{cl:steinitz}).

\begin{proof}[Proof of \Cref{thm:main}]
    Assume without loss of generality $m > 1$; the case $m = 1$ is either covered by the packing case (\Cref{thm:packing}) or $|\ILP(A,\vecb)| \in \{0, \infty\}$. 
    Moreover, $|\ILP(A,\vecb)| = \infty$ if and only if the greatest common divisor of the values of $A$ divides the right-hand side, which we simply check in the beginning.

    In the following, we represent a multiset by storing the distinct set elements along with the number of their occurrences in binary. 

    Consider the following multiset for $i \ge 0$ and 
    $\vect \in \mathbb{Z}^m$
    \begin{equation*}
        C_i(\vect) \coloneqq \mleft\{ A\vecx : \vecx \in \{0, \dots, 2^i-1\}^n \text{ and } A\vecx \equiv \vect \Mod{2^i} \mright\}. 
    \end{equation*}
    We observe that to compute $|\ILP(A, \vecb)|$, it suffices to compute the multisets $C_{i^*}(\vecb)$ and $C_{i^*}(\veczero)$, where
    $i^* \coloneqq \lceil \log_2 \Gamma_{\vecb} \rceil + 1$ and
    $\Gamma_{\vecb}$ is as defined in \cref{cor:steinitz-apply}. In particular,
    $i^*$ is a positive integer and $2^{i^*} > \Gamma_{\vecb}$.
    \begin{claim}\label{cl:steinitz}
        The following properties hold:
        \begin{enumerate}
            \item If $\vecb \notin C_{i^*}(\vecb)$, then $|\ILP(A,\vecb)| = 0$.
            \item \label{it:2} If $\vecb \in C_{i^*}(\vecb)$ and  $\veczero$ occurs at least twice in $C_{i^*}(\veczero)$, then $|\ILP(A,\vecb)| = \infty$.
            \item In all other cases, $\vecb$ occurs $|\ILP(A,\vecb)|$ times in $C_{i^*}(\vecb)$.
        \end{enumerate}
    \end{claim}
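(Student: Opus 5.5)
The plan is to use that the multiplicity of a value $\vect'$ in $C_{i}(\vect)$ equals the number of $\vecx\in\{0,\dots,2^{i}-1\}^n$ with $A\vecx=\vect'$ (provided $\vect'\equiv\vect \Mod{2^i}$, which holds for $\vect'=\vect$). Thus the multiplicity of $\vecb$ in $C_{i^*}(\vecb)$ is $N_{\vecb}\coloneqq|\{\vecx\in\ILP(A,\vecb):\norm{\vecx}_\infty<2^{i^*}\}|$. Likewise, the multiplicity of $\veczero$ in $C_{i^*}(\veczero)$ is the number of $\vecx\in\{0,\dots,2^{i^*}-1\}^n$ with $A\vecx=\veczero$, and this count always includes $\vecx=\veczero$. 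Each item then follows by comparing these truncated counts with the full counts, using \Cref{cor:steinitz-apply} together with $2^{i^*}>\Gamma_{\vecb}\ge\Gamma_{\veczero}$ (note $\Gamma_{\vect}$ is monotone in $\norm{\vect}_\infty$). Any vector with $\norm{\cdot}_1\le\Gamma_{\vecb}$ automatically has $\norm{\cdot}_\infty<2^{i^*}$.

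For item 1, I argue the contrapositive. Suppose $\ILP(A,\vecb)\neq\emptyset$ and pick a solution $\vecx$. If $\norm{\vecx}_1\le\Gamma_{\vecb}$, it already lies in the box. Otherwise property 1 of \Cref{cor:steinitz-apply} gives a solution $\vecx'$ with $\norm{\vecx'}_1\le\Gamma_{\vecb}$, which lies in the box. In either case $N_{\vecb}\ge1$, so $\vecb\in C_{i^*}(\vecb)$.

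For item 2, the hypothesis gives a nonzero $\vecy\ge\veczero$ with $A\vecy=\veczero$, and some $\vecx\in\ILP(A,\vecb)$ by the multiplicity interpretation. Then $\vecx+k\vecy\in\ILP(A,\vecb)$ for all $k\ge0$, and these are pairwise distinct, so the count is infinite.

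Item 3 is the main step. Assume $\veczero$ occurs exactly once in $C_{i^*}(\veczero)$, i.e.\ the only cycle in the box is $\veczero$. By property 2 of \Cref{cor:steinitz-apply}, applied contrapositively, every nonzero $\vecy\ge\veczero$ with $A\vecy=\veczero$ has $\norm{\vecy}_1>\Gamma_{\veczero}$ would produce a short cycle with $\norm{\cdot}_1\le\Gamma_{\veczero}<2^{i^*}$. That contradicts the assumption, so there is no nonzero cycle at all. Applying the lemma with $\vect=\veczero$ needs care when the long solution is $\vecy$ itself; here property 2, or property 1 with $\vect=\veczero$, yields the nonzero short solution. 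Now suppose some $\vecx\in\ILP(A,\vecb)$ has $\norm{\vecx}_1>\Gamma_{\vecb}$. The proof of \Cref{cor:steinitz-apply}, and property 2 directly, then yields a nonzero cycle, which is a contradiction. Hence every solution satisfies $\norm{\vecx}_1\le\Gamma_{\vecb}<2^{i^*}$, so all solutions lie in the box and $N_{\vecb}=|\ILP(A,\vecb)|$.

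The only delicate points are the bookkeeping that $\Gamma_{\veczero}\le\Gamma_{\vecb}<2^{i^*}$, and the observation that property 2 is stated unconditionally once a long solution exists. These are what the argument relies on.
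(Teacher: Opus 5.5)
Your proof is correct and follows the paper's argument. Item 1 uses property 1 of \Cref{cor:steinitz-apply}, item 2 uses the family $\vecx+k\vecy$, and item 3 shows that a solution outside the box would, via property 2, produce a short nonzero cycle and hence a second occurrence of $\veczero$ in $C_{i^*}(\veczero)$. Your intermediate step in item 3, first ruling out all nonzero cycles including long ones via property 1 with $\vect=\veczero$, is harmless but unnecessary. The short cycle from property 2 already lies in the box and contradicts the assumption directly, which is exactly how the paper argues.
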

    \begin{claimproof}
        By \Cref{cor:steinitz-apply}, if $|\ILP(A,\vecb)| > 0$ then there must be a solution $\vecx \in \ILP(A, \vecb)$ with $\norm{\vecx}_1 \le \Gamma_{\vecb}$.
        This $\vecx$ must be counted in $C_{i^*}(\vecb)$, proving property~1.
        
        For property~2, notice that 
        $\veczero$ always occurs in $C_{i^*}(\veczero)$ because  $A\veczero=\veczero$. 
        Thus, a second occurrence of $\veczero$ yields a nonzero vector 
        $\vecy \in \mathbb Z_{\ge 0}^n$ with $A\vecy=\veczero$.
        If $\vecb \in C_{i^*}(\vecb)$ then there exists $\vecx$ with $A\vecx = \vecb$, and thus $\ILP(A,\vecb)$ contains infinitely many vectors of the form $\{\vecx + k\vecy\}_{k \in \mathbb{Z}_{\ge 0}}$. 

        To prove property~3, suppose there is a solution $\vecx' \in \ILP(A, \vecb)$ such that $A \vecx'$ is not represented in $C_{i^*}(\vecb)$. This means that \[\norm{\vecx'}_1 \ge \norm{\vecx'}_\infty \ge 2^{i^*} > \Gamma_{\vecb}\] and by \Cref{cor:steinitz-apply} there must be a nonzero $\vecy \in \mathbb{Z}^n_{\ge 0}$ with $A\vecy = \veczero$ and $\norm{\vecy}_1 \le \Gamma_{\veczero} \le \Gamma_{\vecb}$.
        Hence, both $A \veczero$ and $A \vecy$ are contained in $C_{i^*}(\veczero)$, i.e., $\veczero$ occurs at least twice in $C_{i^*}(\veczero)$. So the assumptions of property~2 are satisfied, a contradiction.
    \end{claimproof}

    The goal of the algorithm is therefore to compute the multisets $C_{i^*}(\vecb)$ and $C_{i^*}(\veczero)$.
    Note that $C_0(\vect) = \{\veczero\}$ for any $\vect \in \Z^m$.
    We show how to efficiently compute $C_{i+1}(\vect)$ from $C_i(\vect)$. 
    Define the following multiset
    \begin{equation*}
        B \coloneqq \mleft\{A\vecx : \vecx \in \{0,1\}^n \mright\}.
    \end{equation*}
    
    We establish the following recurrence.
    \begin{claim}
        \label{cl:transition}
        \[
        C_{i+1}(\vect) = \big(C_i(\vect) + 2^iB\big) \, \cap \, \mleft\{\vecr \in \mathbb{Z}^m : \vecr \equiv \vect \Mod{2^{i+1}}\mright\}.
        \]
    \end{claim}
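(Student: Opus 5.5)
We prove the identity as an equality of multisets. The plan is to build an explicit bijection between the indexing vectors on both sides, using the binary splitting $\vecx = \vecx' + 2^i \vecy$, where $\vecx' \in \{0,\dots,2^i-1\}^n$ and $\vecy \in \{0,1\}^n$. Throughout, the sum of two multisets $S+T$ is read as the multiset $\{s+t : s\in S, t\in T\}$, with multiplicities multiplying. Likewise, $2^iB$ denotes the multiset $\{2^i A\vecy : \vecy\in\{0,1\}^n\}$.

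\textbf{Step 1: the splitting is a bijection.} The map
\[
(\vecx',\vecy)\mapsto \vecx'+2^i\vecy
\]
sends $\{0,\dots,2^i-1\}^n\times\{0,1\}^n$ onto $\{0,\dots,2^{i+1}-1\}^n$, and it is bijective. This holds coordinatewise: $y_j$ is the $i$-th bit of $x_j$, and $x'_j = x_j \bmod 2^i$. Linearity gives
\[
A\vecx = A\vecx' + 2^i A\vecy .
\]

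\textbf{Step 2: the congruence conditions match.} Fix $\vecx$ with $A\vecx\equiv \vect \Mod{2^{i+1}}$. Reducing modulo $2^i$ kills the term $2^iA\vecy$, so $A\vecx'\equiv\vect\Mod{2^i}$. Hence $A\vecx'$ is an element of $C_i(\vect)$ (as the copy indexed by $\vecx'$), and $A\vecx$ is the element of $C_i(\vect)+2^iB$ indexed by the pair $(\vecx',\vecy)$. This element lies in the residue class $\{\vecr \equiv \vect \Mod{2^{i+1}}\}$.

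Conversely, take any pair $(\vecx',\vecy)$ indexing an element of $C_i(\vect)+2^iB$ whose value $\vecr = A\vecx'+2^iA\vecy$ satisfies $\vecr\equiv\vect\Mod{2^{i+1}}$. Then $\vecx = \vecx'+2^i\vecy$ lies in $\{0,\dots,2^{i+1}-1\}^n$ and $A\vecx=\vecr$, so $\vecx$ contributes exactly this element to $C_{i+1}(\vect)$.

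\textbf{Step 3: multiplicities agree.} The two directions are inverse to each other by Step 1. So the index sets of the two multisets are in bijection, and the bijection preserves the value $A\vecx$. Therefore every vector occurs with the same multiplicity on both sides.

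\textbf{Main subtlety.} The only delicate point is interpreting the intersection with the residue class as a \emph{multiset filter}: it keeps all copies of each qualifying element rather than taking a set intersection. Once that reading is fixed, the argument reduces to the bookkeeping above; the paper's binary multiplicity representation is unaffected.
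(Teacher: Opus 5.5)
Your proof is correct and follows essentially the same route as the paper. Both split each $\vecx \in \{0,\dots,2^{i+1}-1\}^n$ as $\vecx = (\vecx \bmod 2^i) + 2^i\vecy$ with $\vecy\in\{0,1\}^n$, match the residue conditions modulo $2^i$ and $2^{i+1}$, and conclude that the resulting bijection of index sets preserves multiplicities. Your write-up spells out the reverse inclusion and the multiset bookkeeping more explicitly than the paper, which only asserts these points briefly.
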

    \begin{claimproof}
        The right-hand side clearly enforces the residue condition and is hence included in the left-hand side. For the other inclusion, consider $\vect' \in C_{i+1}(\vect)$ with $A\vecx = \vect'$. Clearly, $\vect' \equiv \vect \Mod{2^{i+1}}$. Furthermore, define 
        \begin{equation*}
            \vecx' \coloneqq \vecx \Mod{2^i} \quad \text{and} \quad \vecx'' \coloneqq \vecx - \vecx'.
        \end{equation*}
        Since $\vecx'' \in \{0,2^i\}^n$, we have $A\vecx'' \in 2^i B$.
        Moreover
        \begin{equation*}
            \vect \equiv \vect' \equiv A\vecx \equiv A\vecx' + A\vecx'' \Mod{2^{i+1}},
        \end{equation*}
        and since $A\vecx'' \equiv 0 \Mod{2^i}$,  we have $A\vecx' \in C_i(\vect)$. This shows that $\vect' \in (C_i(\vect) + 2^i B)$.
        
        Additionally, observe that this mapping also defines a bijection, hence the number of occurrences of each element in the multisets match as well.
    \end{claimproof}
    
    \begin{algorithm}[ht!]
        \caption{Our dynamic programming counting algorithm for \cref{thm:main}}
        \label{alg:main}
        \SetArgSty{upshape}
        \KwIn{A matrix $A \in \{-\Delta, \ldots, \Delta\}^{m \times n}$ and a vector $\vecb\in\mathbb{Z}^m$}
        \KwOut{The count $|\ILP(A,\vecb)|$}
        Compute $B = \{A\vecx : \vecx \in \{0,1\}^n\}$\\
        \For{$\vect \in \{\veczero, \vecb\}$}{
        Initialize $C_0(\vect) = \{\veczero\}$\\
        \For{$i = 0, \ldots, i^*-1 = \lceil\log_2\Gamma_{\vecb}\rceil$}{
            Compute the multiset $C_{i+1}(\vect) = (C_i(\vect) + 2^iB) \, \cap \, \{\vecr \in \mathbb{Z}^m : \vecr \equiv \vect \Mod{2^{i+1}}\}$
        } 
        }
        \If{$\vecb \notin C_{i^*}(\vecb)$}{\Return{0}}
        \ElseIf{$\vecb \in C_{i^*}(\vecb)$ and $\veczero$ occurs at least twice in $C_{i^*}(\veczero)$}{\Return{$\infty$}}
        \Else{\Return{the number of occurrences of $\vecb$ in $C_{i^*}(\vecb)$}}
    \end{algorithm}

    Our algorithm works as follows. 
    First, compute $B$ and initialize $C_0(\vecb) = C_0(\veczero) = \{\veczero\}$. Then, iteratively compute $C_{i}(\vecb)$ and $C_{i}(\veczero)$ for $i = 1, \dots, i^*$ using the previously computed transition (\Cref{cl:transition}). 
    Using \cref{cl:steinitz} and the number of occurrences of $\vecb$ in $C_{i^*}(\vecb)$ and of $\veczero$ in $C_{i^*}(\veczero)$, we can now deduce the correct value of $|\ILP(A, \vecb)|$, including whether it is infinite (see also \Cref{alg:main}, the case $m=1$ is omitted there since it only influences $\log$ factors).
    
    Finally, we prove the bound on the number of operations.
    \begin{claim}\label{cl:time}
        $B$ can be computed using $\Oh(n\Delta)^m \cdot \log(n\Delta)$ operations. Given $C_i(\vect)$ and $B$, $C_{i+1}(\vect)$ can also be computed using $\Oh(n\Delta)^m\cdot \log(n\Delta)$ operations.
    \end{claim}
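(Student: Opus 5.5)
Both parts of \Cref{cl:time} reduce to multivariate polynomial multiplication, so I would compute them with FFT (\Cref{lem:fft}) after shifting all exponents to be non-negative.

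For $B$, the multiset $\{A\vecx : \vecx\in\{0,1\}^n\}$ is encoded by the Laurent polynomial $\prod_{i=1}^n (1+\vecz^{\veca_i})$: the coefficient of $\vecz^{\vecv}$ is the multiplicity of $\vecv$ in $B$. Multiplying the $i$-th factor by $\vecz^{\Delta\mathbf{1}}$ gives an ordinary polynomial $(\vecz^{\Delta\mathbf{1}}+\vecz^{\veca_i+\Delta\mathbf{1}})$ with degree at most $2\Delta$ per variable. I would then copy the divide-and-conquer product of \Cref{clm:Qz}. At level $\lambda$ the polynomials have degree at most $2^{\lambda+1}\Delta$ per variable and there are $2^{\lceil\log n\rceil-\lambda}$ of them. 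Summing $2^{\lceil \log n\rceil-\lambda}\Oh(2^\lambda\Delta)^m\log(n\Delta)$ over $\lambda$ gives a geometric series dominated by the top level, since $m\ge 1$. Because $m>1$ here, the factor $2^{-\lambda}\cdot 2^{\lambda m}$ grows at least like $2^{\lambda}$, so the sum is $\Oh(n\Delta)^m\log(n\Delta)$ with no extra $\log n$ factor. This is exactly why the paper assumed $m>1$. Finally I shift the exponents back by $n\Delta\mathbf{1}$. All entries of $B$ lie in $[-n\Delta,n\Delta]^m$.

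For the transition, the main obstacle is that $C_i(\vect)$ may contain vectors of norm up to about $2^i n\Delta$, which is far too large a support for FFT. The key observation is structural. Every element of $C_i(\vect)$ is congruent to $\vect$ modulo $2^i$ and lies in $[-(2^i-1)n\Delta,(2^i-1)n\Delta]^m$. So I would write each element as $\vect_0 + 2^i\vecy$, where $\vect_0$ is a fixed representative of $\vect \bmod 2^i$ with entries in $[0,2^i)$. Then $\vecy$ ranges over a box of side $\Oh(n\Delta)$, roughly $[-n\Delta-1,n\Delta]^m$. Encode $C_i(\vect)$ as a polynomial $P_i(\vecz)=\sum \mathrm{mult}\cdot\vecz^{\vecy}$ of degree $\Oh(n\Delta)$ per variable after shifting. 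Then $C_i(\vect)+2^iB$ corresponds to $P_i(\vecz)\cdot \prod_j(1+\vecz^{\veca_j})$, the same polynomial $B(\vecz)$ computed above, because $\vect_0+2^i\vecy+2^i\vecv=\vect_0+2^i(\vecy+\vecv)$. One FFT of two polynomials of degree $\Oh(n\Delta)$ per variable costs $\Oh(n\Delta)^m\log(n\Delta)$.

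Intersecting with the residue class modulo $2^{i+1}$ just keeps the monomials whose $\vecy+\vecv$ has the right parity vector. Concretely, $\vect_0+2^i\vecw\equiv\vect \pmod{2^{i+1}}$ fixes $\vecw \bmod 2$. This is the parity grouping step of \Cref{alg:recursion} and costs linear time. To re-encode the surviving elements for the next round, I would rewrite them relative to the new representative of $\vect\bmod 2^{i+1}$, which halves the exponents, $\vecw\mapsto(\vecw-\vecs)/2$. This keeps the degree bound $\Oh(n\Delta)$, because the norm bound $(2^{i+1}-1)n\Delta$ divided by $2^{i+1}$ is still at most $n\Delta$. I expect the main care to be in checking this invariant on the box size across rounds; it holds since each element of $C_i(\vect)$ is $A\vecx$ with $\norm{\vecx}_\infty<2^i$.
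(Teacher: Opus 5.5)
Your proposal is correct and follows the same route as the paper. The paper computes $B$ by a divide-and-conquer FFT product whose per-level costs form a geometric series because $m>1$, and it computes $C_i(\vect)+2^iB$ with a single FFT, using that all relevant vectors lie in one residue class modulo $2^i$ inside a box of side $\Oh(2^i n\Delta)$. Your explicit reparametrization $\vect_0+2^i\vecy\mapsto\vecy$, the parity filter, and the halving re-encoding spell out what the paper compresses into ``all three sets are contained in a set of size $\Oh(n\Delta)^m$.'' One small slip: your aside ``since $m\ge 1$'' should read $m>1$, because for $m=1$ every level costs the same and an extra $\log n$ factor appears, as your next sentence already says.
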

    \begin{claimproof}
        The computation of $B$ proceeds similarly to \Cref{clm:Qz}.
        Notice that any sum of at most $k$ column vectors of $A$ stays in the set $\{-k\Delta, \dots, k\Delta\}^m$.
        We use recursion to compute $B$: Divide the set of columns $\{1, \dots, n\}$ into two roughly equal-sized parts and recursively compute $B$ including the counts for each vector for $A$ restricted to both parts.
        Finally, combine the two parts with FFT\footnote{A well-known observation is that this operation is equivalent to multiplying two polynomials by treating the vector coordinates as exponents and the multiplicities as factors in front of the corresponding monomials. Hence, \Cref{lem:fft} applies.}~(\Cref{lem:fft}) to obtain the whole set $B$.
        On recursion level $\lambda \in \{0, \dots, \lceil\log n\rceil\}$, there are at most $2^{\lceil\log n\rceil - \lambda}$ FFT instances with vectors in $\{-2^{\lambda}\Delta, \dots, 2^{\lambda}\Delta\}^m$ each.
        Since we assume $m > 1$, the cost per recursion level decreases geometrically, and we can bound the total number of operations by
        \begin{equation*}
            \sum_{\lambda=0}^{\lceil\log n\rceil} 2^{\lceil\log n\rceil - \lambda} \Oh(2^\lambda\Delta)^m \cdot m\log (2^\lambda \Delta) \le \Oh(n\Delta)^m \log(n\Delta) \sum_{\lambda=0}^{\lceil\log n\rceil}  2^{-\lambda(m-1)}  \le \Oh(n\Delta)^m \log(n\Delta).
        \end{equation*}

        To compute $C_{i+1}(\vect)$, we use \Cref{cl:transition}. Consider the box 
        \begin{equation*}
            \Lambda \coloneqq \{-2^{i+1}n\Delta, \dots, 2^{i+1}n\Delta\}^m.
        \end{equation*}
        Notice that $C_i(\vect)$, $2^iB$, and $C_i(\vect) + 2^iB$ are all contained in $\Lambda$.
        Moreover, $2^iB$ contains only $\vect'$ with $\vect' \equiv \veczero \Mod{2^i}$, while $C_i(\vect)$ and $C_i(\vect) + 2^iB$ only contain $\vect'$ with $\vect' \equiv \vect \Mod{2^i}$. Hence, all three sets are contained in a set of size $\Oh(n\Delta)^m$, and we can compute $C_i(\vect) + 2^iB$ using FFT. The number of operations is $\Oh(n\Delta)^m \cdot \log(n\Delta)$.
    \end{claimproof}

    In total, the algorithm requires one computation of $B$ and $\Oh(i^*) = \Oh(m \log ( \norm{\vecb}_\infty + m\Delta))$ iterations. 
    By \Cref{cl:time}, each iteration requires $\Oh(n\Delta)^m \log(n\Delta)$ operations.
    When $m=1$, running the algorithm from \Cref{thm:packing} requires $\Oh(n\Delta)^m \log(n\Delta)\log(n\Delta\norm{\vecb}_\infty)$ operations.
    This last term also bounds the total number of operations, since
    \begin{equation*}
        \Oh(n\Delta)^m\cdot \log(n\Delta) \cdot \Oh(m \log(\norm{\vecb}_\infty + m\Delta)) \le \Oh(n\Delta)^m\cdot \log(n\Delta) \log(n\Delta\norm{\vecb}_\infty). \qedhere
    \end{equation*}
\end{proof}

Note that one $\log n$ factor only shows up in the case that $m=1$, where our problem immediately reduces to Counting Unbounded Subset Sum with only positive elements.

\paragraph{Running Time Bound.}
We now argue how to transform the bound on the number of arithmetic operations as in \Cref{thm:main} into a bound on the running time in, say, the word RAM model.
To do so, it is sufficient to bound the size of the integers used in (intermediate) operations in the algorithm.
Clearly, any vector $\vecx \in \Z^n_{\ge 0}$ handled by the algorithm is bounded by \[\norm{\vecx}_\infty \le \Oh(\Gamma_{\vecb}) \le \Oh(\norm{\vecb}_\infty m\Delta)^m.\]
Therefore, any intermediate matrix-vector-product $A\vecx$ is still bounded by \[\norm{A\vecx}_\infty \le n \Delta \cdot \norm{\vecx}_\infty \le  n\cdot \Oh(\norm{\vecb}_\infty m\Delta)^m.\]

The number of occurrences of any such vector $A\vecx$ in $B$ is bounded by $|B| = 2^n$. The number of occurrences in $C_i(vecb)$ doubles at most in each iteration. Therefore, the counts handled in the algorithm are bounded by $2^n\cdot \Gamma_{\vecb}$ and the number of bits required tto represent these number is 
\[\log \mleft( 2^n\cdot \Gamma_{\vecb} \mright) \le \Oh(nm)\cdot \log (\norm{\vecb}_\infty m\Delta).\]
In total, this yields a running time of $\Oh(n)^{m+1} \cdot \Oh(\Delta)^m \cdot \log (n\Delta)\cdot \log^2(n\Delta \norm{\vecb}_\infty)$.

\section{Conclusion}

The dependence on $m$ is essentially tight, even in the packing case
(\cref{thm:packing}): Knop, Pilipczuk and Wrochna~\cite{knop2020tight} showed
that even when $A \in \{0,1\}^{m \times n}$, the existence of an algorithm running in time $2^{o(m
\log{m})} \cdot (n\norm{\vecb}_\infty)^{o(m)}$ that decides whether $|\ILP(A,\vecb)| >
0$ would contradict the Exponential Time Hypothesis.  Notice that the dependence on
$\Delta$ is also tight in the packing case.  Namely, for every constant $m \ge
1$ and $\eps > 0$, there is no algorithm running in time
$2^{o(n)} \cdot \Delta^{m-\eps}$, assuming the Strong Exponential Time Hypothesis
(see~\cite{DBLP:conf/stoc/BringmannDW26})\footnote{Bringmann et al.~\cite{DBLP:conf/stoc/BringmannDW26} actually showed that for every $m \ge 1$,
$m$-dimensional Subset Sum is equivalent to Subset Sum, which by~\cite{DBLP:journals/talg/AbboudBHS22} yields SETH-hardness. Both of these results can be adapted
to work for Unbounded Subset Sum variants.}. This was also observed by Jansen
and Rohwedder~\cite{jansen2023integer} in the general case of $A \in
\{-\Delta,\ldots,\Delta\}^{m \times n}$. 

\paragraph{AI Disclosure.}
AI tools (Codex with ChatGPT 5.6) were used for proofreading and suggesting improvements to the exposition. 
The authors assume full responsibility.

\bibliographystyle{abbrv}
\bibliography{bib}

\end{document}